\newtheorem{theorem}{Theorem}
\newtheorem{lemma}[theorem]{Lemma}
\newtheorem{example}{Example}
\newcommand{\GL}{\ensuremath{\mathbb{GL}}}
\newcommand{\zed}{\ensuremath{\mathbb{Z}}}
\newcommand{\eff}{\ensuremath{\mathbb{F}}}
\newcommand{\conj}{\ensuremath{\mathsf{conj}}}
\newcommand{\cent}{\ensuremath{\mathsf{C}}}
\title{A Brief Retrospective Look at the Cayley-Purser Public-key 
Cryptosystem, 19 Years Later}
\author{Douglas~R.~Stinson\thanks{The author's research is supported by  
NSERC discovery grant RGPIN-03882.}\\
David R.\ Cheriton School of Computer Science\\ University of Waterloo\\
Waterloo, Ontario N2L 3G1, Canada}
\begin{document}
\maketitle

\begin{abstract}
The purpose of this paper is to describe and analyze the Cayley-Purser algorithm, which is
a public-key cryptosystem proposed by Flannery in 1999.
I will present two attacks on it, one of which is apparently new. 
I will also examine a variant of the Cayley-Purser
algorithm that was patented by Slavin in 2008, and show that it is also insecure.
\end{abstract}

\section{Introduction}

When she was only 16 years of age, Sarah Flannery won the {\it EU Young Scientist of the Year Award} for 1999.
Her project consisted of a proposal of a public-key cryptosystem based on $2$ by $2$
matrices with entries from  $\zed_n$, where $n$ is the product of
two distinct primes $p$ and $q$. The cryptosystem she proposed was
named the {\it Cayley-Purser algorithm}.\footnote{The cryptosystem was named after the mathematicians Arthur Cayley and Michael Purser. Flannery \cite{Flan} 
states that the Cayley-Purser
algorithm was based in part on ideas in an unpublished paper by Michael Purser.} 

Because this algorithm
was faster than the famous {\it RSA public-key cryptosystem}, it 
garnered an incredible amount of press coverage in early 1999;
see, for example, the BBC News article \cite{BBC} published on January 13, 1999. 
However, at the time of this press coverage, the algorithm had not undergone 
any kind of serious peer review.
Unfortunately, the Cayley-Purser algorithm was shown to be insecure
later in 1999, e.g., as reported by Bruce Schneier \cite{Schneier} in December, 1999.

Ms Flannery later wrote an interesting book, entitled
{\it In  Code: A Mathematical Journey} \cite{Flan},
which recounts her experiences relating to her work on the algorithm.
The technical description and the analysis of the Cayley-Purser algorithm,
as well as an attack on it, 
are found in \cite[Appendix A]{Flan}.

In this paper, I will describe the Cayley-Purser algorithm
and two attacks on it, one of which is apparently new. I will also comment
a bit on the underlying mathematical theory. Finally, I will examine a variant of the Cayley-Purser
algorithm, which was patented in 2008 by Slavin, and show that it is also insecure.

\section{The Cayley-Purser Algorithm}

In this section, we describe the Cayley-Purser algorithm, which is
presented in \cite[pp.\ 274--277]{Flan}.
Note that all material in this section is paraphrased from \cite{Flan}.

\bigskip

\noindent {\bf Setup}: Let $n = pq$, where $p$ and $q$ are large distinct primes.
(We assume that it is infeasible to factor $n$.) 
$\GL(2,n)$ denotes the $2$ by $2$ invertible matrices 
with entries from $\zed_n$.  Let $A,C \in \GL(2,n)$
be chosen such that $AC \neq CA$. Define $B = C^{-1}A^{-1}C$.
Then choose a secret, random positive integer $r$ and let $G = C^r$.

\bigskip

\noindent The {\em  public key} consists of $A, B, G, n$.

\bigskip

\noindent The {\em private key} consists of $C,p,q$. 

\bigskip

\noindent {\bf Encryption}: Let $X \in \GL(2,n)$ be the plaintext to be
encrypted. The following computations are performed:
\begin{enumerate}
\item choose a secret, random positive integer $s$
\item compute $D = G^s$
\item compute $E = D^{-1}AD$
\item compute $K = D^{-1}BD$
\item compute $Y = KXK$
\item the ciphertext is $(E,Y)$.
\end{enumerate}

\bigskip

\noindent {\bf Decryption}: Let $(E,Y) \in \GL(2,n) \times \GL(2,n)$ 
be the ciphertext to be
decrypted. The following computations are performed:
\begin{enumerate}
\item compute $L = C^{-1}EC$ (note: $L = K^{-1}$)
\item compute $X = LYL$
\end{enumerate}
Observe that the factorization $n=pq$ is not needed in order to decrypt ciphertexts; 
the matrix $C$ is all that is required.

The correctness of the decryption process is easy to show.

\begin{theorem} \cite{Flan}
If the ciphertext $(E,Y)$ is an encryption of the plaintext $X$, then the decryption
of $(E,Y)$ yields $X$.
\end{theorem}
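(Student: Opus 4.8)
The plan is to verify the decryption identity $X = LYL$ directly, by substituting the encryption formulas and exploiting the one structural fact that makes the scheme work: the matrix $D$ is a power of $C$, hence commutes with $C$. Indeed, since $G = C^r$ and $D = G^s$, we have $D = C^{rs}$, so that $DC = CD$ and likewise $D^{-1}C = CD^{-1}$. I would isolate this commutativity observation at the very outset, since every cancellation below depends on it.

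The central step is to establish the parenthetical claim in the decryption algorithm, namely that $L = K^{-1}$. First I would write both matrices in terms of $A$, $C$, and $D$: from $E = D^{-1}AD$ we get $L = C^{-1}EC = C^{-1}D^{-1}ADC$, while $K = D^{-1}BD = D^{-1}C^{-1}A^{-1}CD$ is already in such a form. I would then form the product $LK$ and push each factor of $C$ or $C^{-1}$ past the adjacent factor of $D$ or $D^{-1}$ using commutativity; this telescopes via $DCD^{-1} = C$ and $D^{-1}CD = C$, leaving $AA^{-1} = I$ in the interior and finally $C^{-1}C = I$ on the outside, so that $LK = I$. Hence $L = K^{-1}$.

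Granting $L = K^{-1}$, the theorem follows in one line: since $Y = KXK$, the decryption output is $LYL = K^{-1}(KXK)K^{-1} = X$, as desired. The only real obstacle is the commutativity bookkeeping in the $LK$ computation, and even that is routine once one recognizes that $D$ is literally a power of $C$; in particular, neither the factorization $n = pq$ nor any property of $\GL(2,n)$ beyond the stated invertibility of the matrices involved is used anywhere in the argument.
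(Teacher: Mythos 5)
Your proof is correct and follows essentially the same route as the paper: both arguments reduce the theorem to the identity $L = K^{-1}$, verify it by a direct computation in which the commutativity of $C$ and $D$ (a consequence of $D = C^{rs}$) is the only nontrivial ingredient, and then conclude $LYL = K^{-1}(KXK)K^{-1} = X$. The only cosmetic difference is that you substitute $B = C^{-1}A^{-1}C$ into $K$ before multiplying, whereas the paper cancels against $B^{-1} = C^{-1}AC$ midway through; the telescoping is identical.
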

\begin{proof}
First we show that $L  = K^{-1}$:
\[
\begin{array}{rcll}
LK & = & (C^{-1}EC)(D^{-1}BD) &\text{substituting for $L$ and $K$ }\\
 & = & C^{-1} (D^{-1}AD)  CD^{-1}BD & \text{substituting for $E$  }\\
 & = & D^{-1}C^{-1} A CD D^{-1}BD & \text{because $C$ and $D$ commute} \\
 & = & D^{-1}C^{-1} A CBD & \text{cancelling $D D^{-1}$} \\
 & = & D^{-1}B^{-1}BD & \text{because  $B^{-1} = C^{-1} A C$} \\
 & = & I.
\end{array}
\]
Then it is easy to verify that
\[ LYL  = K^{-1} Y K^{-1} = X.\]
\end{proof}

\section{Two Attacks}
\label{attacks.sec}

\noindent The basis of the two attacks we will describe is the observation 
from \cite[p.\ 290]{Flan} 
that any scalar multiple $\mu C$  can be used in place of $C$ in the decryption
process. This is easy to see, because
\begin{equation}
\label{scalar.eq}
 (\mu C)^{-1}E (\mu C) = C^{-1}EC.
\end{equation}
Therefore, using $\mu C$ in step 1 of the decryption process still results in the
correct value of $L$ being computed.

Thus, it is sufficient for an attacker to compute $C$ up to a scalar multiple.
This will allow any ciphertext to be decrypted, since the factorization $n=pq$
is not required in order to be able to decrypt ciphertexts.

\subsection{Linear Algebra Attack}

The attack described in this section is very simple but apparently new. 
It turns out to be straightforward
to construct the private key $C$ (or a scalar multiple $\mu C$) directly from
the public key by solving a certain
system of linear equations in $\zed_n$. 
We make use of the following two equations involving $C$:
\begin{equation}
\label{eq1} CB = A^{-1}C
\end{equation}
and
\begin{equation}
\label{eq2} CG = GC
\end{equation}
Note that (\ref{eq1}) follows from the formula $B = C^{-1}A^{-1}C$.
It is also clear that (\ref{eq2}) holds because $G$ is a power of $C$ and hence
$G$ and $C$ commute.

We observe that (\ref{eq1}) and (\ref{eq2}) are sufficient to compute 
$C$, up to a scalar multiple, by solving a system of linear equations
in $\zed_n$. In these equations, $A, B$ and $G$ are known matrices
and we are trying to determine $C$. Let 
\begin{equation}
\label{C.eq}
 C =
\left(
\begin{array}{cc}
a & b\\
c & d
\end{array}
\right),
\end{equation} where $a,b,c,d \in \zed_n$.
Then (\ref{eq1}) and (\ref{eq2}) each yield four homogeneous linear equations
(in $\zed_n$) in the four unknowns $a,b,c,d$. 
The solution space of (\ref{eq1}) is a $2$-dimensional subspace of $(\zed_n)^4$,
as is the solution space of (\ref{eq2}). However, when we solve all eight equations
simultaneously, we get precisely the scalar multiples of $C$ (i.e., the solution space is a
$1$-dimensional subspace of $(\zed_n)^4$).

We will justify the statements made above in the next section.
For now,  we illustrate the attack with a toy example.

\begin{example}
\label{ex1}
Suppose $p = 193$ and $q= 149$, so $n = 28757$. Suppose we define
\[ A =
\left(
\begin{array}{cc}
16807 & 19399\\
7483 & 18143
\end{array}
\right)
\]
and 
 \[ C =
\left(
\begin{array}{cc}
2910 & 1657\\
5341 & 24803
\end{array}
\right).
\] 
Then 
\[ B =
\left(
\begin{array}{cc}
11947 & 1712\\
4630 & 14946
\end{array}
\right).
\] 
Finally, suppose $G = C^7$; then 
\[ G =
\left(
\begin{array}{cc}
1438 & 1433 \\ 20759 & 24068
\end{array}
\right).
\] 
The system of linear equation to be solved is
\[ \left(
\begin{array}{cccc}
24034 & 4630 & 19287 & 0 \\
1712 & 27033 & 0 & 19287\\
9570 & 0 & 1724 & 4630 \\
0 & 9570 & 1712 & 4723 \\
0 & 20759 & 27324 & 0 \\
1433 & 22630 & 0 & 27324 \\
7998 & 0 & 6127 & 20759 \\
0 & 7998 & 1433 & 0
\end{array}
\right)
\left(
\begin{array}{c}
a \\ b \\ c \\ d
\end{array}
\right) =
\left(
\begin{array}{c}
0 \\ 0 \\ 0 \\ 0
\end{array}
\right).
\]
The solution to this system is
\[ (a,b,c,d) = \mu (28365, 13928, 25231, 28756),\] $\mu \in \zed_n$.
It is straightforward to
verify that this solution space indeed consists of all the scalar multiples of $C$.
\end{example}

\subsection{Cayley-Hamilton Attack}

The other attack I will present is the original attack presented in \cite[pp.\ 290--292]{Flan}. 
It is in fact even more efficient than the attack we just described above.
We summarize it briefly now.

The Cayley-Hamilton theorem states that every square matrix $A$ over a commutative ring satisfies its own characteristic polynomial. The {\em characteristic polynomial}
of $A$ is the polynomial $\det(x I_{n}-A)$ in the indeterminate $x$, 
where $A$ is an $n$ by $n$ matrix
and $I_n$ is the $n$ by $n$ identity matrix.
When $n = 2$, the characteristic polynomial is quadratic. In this case, 
as noted in \cite[p.\ 291]{Flan}, it follows that 
any power of $A$ can can be expressed as a linear combination of $A$ and $I_2$.

Recall that $G$ is a power of $C$ and hence $C$ is  also a power of $G$. So the 
unknown matrix $C$ can be expressed in the form $C = \alpha I_2 + \beta G$, for scalars
$\alpha$ and $\beta$. Since we only have to determine $C$ up to a scalar multiple, 
we can WLOG take $\beta = 1$, and write $C = \alpha I_2 +  G$ (we are ignoring here the unlikely possibility that $\beta = 0$). Suppose we substitute this
expression for $C$ into (\ref{eq1}). Then we obtain
\[ (\alpha I_2 +  G)B = A^{-1}(\alpha I_2 +  G) .\]
Rearranging this, we have
\[ \alpha (B - A^{-1}) = A^{-1}G - GB.\]
If we compute the two matrices $B - A^{-1}$ and $A^{-1}G - GB$, we can compare
any two corresponding nonzero entries of these two matrices to determine $\alpha$. 

\begin{example}
We use the same parameters as in Example \ref{ex1}.
First we compute 
\[
B - A^{-1} = 
\left(
\begin{array}{cc}
24034 & 20999 \\ 14200 & 4723
\end{array}
\right).
\] 
and 
\[
A^{-1}G - GB = 
\left(
\begin{array}{cc}
17977 & 4614 \\ 25427 & 10780
\end{array}
\right).
\] 
From this, we see that
\[  28534 (B - A^{-1}) = A^{-1}G - GB, \]
so $\alpha = 28534$.
Hence, 
\[ 28534 I_2 +  G  = \left(
\begin{array}{cc}
1215 & 1433 \\ 20759 & 23845
\end{array}
\right)\]
should be a multiple of $C$. In fact, it can be verified that 
\[ \left(
\begin{array}{cc}
1215 & 1433 \\ 20759 & 23845
\end{array}
\right) = 5485 C.
\]
\end{example}

\section{Discussion and Comments}

When the Cayley-Purser algorithm was proposed, there was some mathematical 
analysis provided to justify its security against certain types of attacks
\cite[pp.\ 277--283]{Flan}.
There are some interesting mathematical points related to this that 
I would like to discuss in this section. I will also look briefly at the 
efficiency of encryption and decryption.

\subsection{Security Analysis from \cite{Flan}}

The main possible attack discussed in \cite[pp.\ 277--283]{Flan} involves trying to use
(\ref{eq1}) to compute $C$ (or a scalar multiple of $C$). The argument given
is that the number of solutions (for $C$) to (\ref{eq1}) is so large that it would
be infeasible to distinguish the real value of $C$ from the extra ``bad'' solutions 
to (\ref{eq1}). It is noted that the number of solutions for $C$ is equal to 
$|\cent _{\GL(2,n)}(A^{-1})|$, where $\cent _{\GL(2,n)}(A^{-1})$ 
denotes the centralizer of
$A^{-1}$, i.e., the set of matrices in $\GL(2,n)$ that commute with $A^{-1}$.
(The actual set of solutions to (\ref{eq1}) is a coset of $\cent _{\GL(2,n)}(A^{-1})$.)

Then, a lower bound on $|\cent _{\GL(2,n)}(A^{-1})|$ is obtained from the observation
that every power of $A^{-1}$ (or, equivalently, every power of $A$) is an element of the set  
$\cent _{\GL(2,n)}(A^{-1})$. Hence, $|\cent _{\GL(2,n)}(A^{-1})| \geq \mathit{ord}(A)$. Then, an analysis
of the number of group elements of all possible orders is done, and it is shown that
most group elements have order that is close to $n^2$. Since there are only $n$
scalar multiples of the correct $C$, there are many ``bad'' solutions remaining.

The above-described analysis is correct. But, more precisely, it turns out that it is 
fairly straightforward to determine the exact number of solutions to (\ref{eq1}) using
some standard group theoretic arguments. Note also that the solution space of (\ref{eq1}) 
or (\ref{eq2}) contains tuples $(a,b,c,d)$ where the corresponding matrices
(\ref{C.eq}) turn out not be invertible. 

We need some definitions to get started.
For now, we confine our attention to $\GL(2,q)$ for a prime $q$. The following results
are found in various standard algebra textbooks, such as Dummit and Foote \cite{DM}.
Details of these calculations are presented in 
Mathewson \cite{Mathewson}.

Two matrices $A$ and $B$ are {\em similar} if $B = C^{-1}AC$ for some matrix $C$.
(Thus, if (\ref{eq1}) holds, then $A^{-1}$ and $B$ are similar.)
Similarity is an equivalence relation and the equivalence classes under similarity
are known
as {\em conjugacy classes}. The conjugacy class containing $A$ is denoted
by $\conj (A)$. It follows from the orbit-stabilizer theorem 
that 
\begin{equation}
\label{eq3}
|\GL(2,q)| = |\cent _{\GL(2,n)}(A)| \cdot |\conj (A)|
\end{equation} 
for any $A \in \GL(2,q)$.
Further, it is well-known that
\begin{equation}
\label{eq4} |\GL(2,q)| = (q^2 -1)(q^2-q).
\end{equation}

Now, it is fairly easy to determine the various conjugacy classes by
using the fact that any conjugacy class contains a unique matrix in 
\emph{rational canonical form}. The rational canonical forms in $\GL(2,q)$
have the following possible structures:
\begin{description}
\item [case (1)] \[ \left(
\begin{array}{cc}
a & 0\\
0 & a
\end{array}
\right) .\]
\item [case (2)]
\[ \left(
\begin{array}{cc}
0 & b\\
1 & c
\end{array}
\right).\]
\end{description}
Case 2 further subdivides into three subcases:
\begin{description}
\item [case (2a)] $b^2 + 4a$ is not a perfect square in $\zed_q$,
\item [case (2b)] $b^2 + 4a = 0$ in $\zed_q$, and
\item [case (2c)] $b^2 + 4a$ is a nonzero perfect square in $\zed_q$.
\end{description}
 Further, for a given matrix expressed in
rational canonical form, it is relatively straightforward to determine
$|\cent _{\GL(2,q)}(A)|$. Then $|C_A|$ can also be determined, from (\ref{eq3})
and (\ref{eq4}).
Table \ref{tab1} lists the number of conjugacy classes of all possible sizes
(note that these results are all given in \cite{Mathewson}).

\begin{table}
\caption{The number of conjugacy classes in $\GL(2,q)$ of all possible sizes}
\label{tab1}
\begin{center}
\begin{tabular}{c|c|c}
Case & Size of conjugacy class & Number of conjugacy classes \\ \hline\hline
case (1) & $1$ & $q-1$ \\\hline
case (2a) & $q^2 - q$ & ${\frac{q^2-q}{2}}$ \\\hline
case (2b) & $q^2-1$ & $q-1$ \\\hline
case (2c) & $q^2+q$ & ${\frac{(q-1)(q-2)}{2}}$
\end{tabular}
\end{center}
\end{table}

The Cayley-Purser algorithm lives in $\zed_n$. So the relevant
sizes of conjugacy classes
would be obtained by working modulo $p$ and modulo $q$, and  then applying
the Chinese remainder theorem to derive the sizes of the conjugacy classes 
in $\GL(2,n)$. The vast majority of these conjugacy classes in $\GL(2,n)$ 
have size very close to $n^2$, which indicates that the solution to (\ref{eq1})
will be a two-dimensional subspace of $(\zed_n)^4$.

\bigskip

The second possible attack considered in \cite{Flan} involves trying to determine
the private key $C$ from the public key $G$. It is known that $G = C^r$, where $r$ is
secret. However, $r$ might be chosen from a small range of values (in \cite{Flan}, 
$r \leq 50$). So we might consider  trying various values of $r$ until the equation
 $G = C^r$ can be solved. However, even if $r$ is known, it is not easy to solve this equation.
 For example, consider the special case where $r = 2$ and $G$ is a scalar multiple of the identity.
 Solving for $C$ is then equivalent in difficulty to extracting square roots in $\zed_n$,
 which is equivalent to factoring $n$. So this particular attack will not succeed.

Of course, these two analyses are not sufficient to establish the security of the 
Cayley-Purser algorithm. As we saw in the previous section, an attack that utilizes
all the public information allows $C$ to be computed up to a scalar multiple, which breaks
the cryptosystem.

\subsection{Efficiency of Encryption and Decryption}

We also have a few comments about the efficiency of encryption and decryption
in the Cayley-Purser algorithm. One of the attractive features of 
the Cayley-Purser algorithm is its speed relative to RSA. It is reported 
in \cite[pp.\ 284--289]{Flan} that Cayley-Purser encryption and decryption is
roughly 20--30 times faster than the comparable RSA operations.

Clearly Cayley-Purser decryption is much faster than RSA decryption, because
Cayley-Purser decryption just requires a few fast matrix operations, whereas
RSA decryption uses an exponentiation modulo $n$. On the other hand, Cayley-Purser encryption involves exponentiating the matrix $G$, which is an expensive operation.
However, there is a trick that can be used to speed up encryption. A careful
reading of the Mathematica code that is provided in \cite{Flan} shows that
step 2 of the encryption method is implemented by computing a linear combination
of $G$ and the identity. Using the Cayley-Hamilton theorem, it can easily be shown that this is a quicker 
way of obtaining a matrix $D$ that is actually a power of $G$. With this modification
to the encryption algorithm, no matrix exponentiations are required to encrypt a plaintext.

\section{A Variation due to Slavin}

In this section, I discuss a variation of the Cayley-Purser algorithm due to Slavin \cite{Slav}.
I am not aware of any analysis of this algorithm in the cryptographic literature.
However, it is not difficult to see that it is also insecure.

The following description is from the 2008  U.S.\ patent \cite{Slav}. It is clear that
this  cryptosystem is similar to the Cayley-Purser algorithm in many respects;
however, several of the equations have been modified. 

\bigskip

\noindent {\bf Setup}: Let $n = pq$, where $p$ and $q$ are distinct primes. 
Let $A,C \in \GL(2,n)$
be chosen such that $AC \neq CA$. Define $B = CAC$.
Then choose a secret, random positive integer $r$ and let $G = C^r$.

\bigskip

\noindent The {\em  public key} consists of $A, B, G, n$.

\bigskip

\noindent The {\em private key} consists of $C,p,q$. 

\bigskip

\noindent {\bf Encryption}: Let $X$ be the plaintext to be
encrypted. The following computations are performed:
\begin{enumerate}
\item choose a secret, random positive integer $s$
\item compute $D = G^s$
\item compute $E = DAD$
\item compute $K = DBD$
\item let $Y = e_K(X)$ under some secret-key cryptosystem such as AES.
\item the ciphertext is $(E,Y)$.
\end{enumerate}
{\bf Remark:} The value $K$ is used as a key in a secret-key cryptosystem. This is
different from the Cayley-Purser algorithm, but it does not affect the security of this cryptosystem.

\bigskip

\noindent {\bf Decryption}: Let $(E,Y)$ 
be the ciphertext to be
decrypted. The following computations are performed:
\begin{enumerate}
\item compute $L = CEC$
\item compute $X = d_L(Y)$
\end{enumerate}

Using the fact that $C$ and $D$ commute, it is not difficult to verify that $CEC = DBD$ 
and therefore $L = K$; hence, decryption will succeed.

\subsection{The Attack}

Our attack is based on the  following  observation from \cite{Slav}.

\begin{lemma}
Define $M = BGB^{-1}$ and $N = AGA^{-1}$. Then $M = CNC^{-1}$.
\end{lemma}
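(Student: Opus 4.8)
The plan is to verify the identity by direct substitution, the only nontrivial ingredient being the commutativity of $C$ and $G$. First I would substitute the defining relation $B = CAC$ into the definition $M = BGB^{-1}$. Using $(CAC)^{-1} = C^{-1}A^{-1}C^{-1}$, this yields
\[ M = CAC \, G \, C^{-1}A^{-1}C^{-1}. \]

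Next I would exploit the fact that $G = C^r$ is a power of $C$, so that $C$ and $G$ commute. This is the crucial step: it allows the central block to collapse, since $CGC^{-1} = GCC^{-1} = G$. Performing this replacement gives
\[ M = CA \, G \, A^{-1}C^{-1} = C(AGA^{-1})C^{-1}. \]
Recognizing $AGA^{-1}$ as $N$ then delivers $M = CNC^{-1}$, as required.

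There is really no serious obstacle here; the whole argument is a one-line computation once the relation $CG = GC$ is invoked. If pressed to name the step needing the most care, it is the bookkeeping of the inverse $(CAC)^{-1}$ and ensuring that the three copies of $C$ on each side are paired correctly, so that exactly the innermost pair cancels against $G$ while the outer $C$ and $C^{-1}$ survive to conjugate $N$. I would also remark that this identity is precisely what enables the attack: since $M = BGB^{-1}$ and $N = AGA^{-1}$ are both computable from the public data $A, B, G$, the relation $M = CNC^{-1}$ exhibits $C$ as a matrix conjugating the known $N$ to the known $M$, so that $C$ can be recovered up to a scalar by solving the homogeneous linear system $MC = CN$ over $\zed_n$.
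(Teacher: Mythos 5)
Your verification is correct and is essentially the paper's proof run in the opposite direction: the paper starts from $CNC^{-1}$ and inserts $CC^{-1}$ twice to manufacture the pattern $B = CAC$, while you expand $B$ inside $M = BGB^{-1}$ and collapse $CGC^{-1} = G$; both hinge on exactly the same two facts, $B = CAC$ and $CG = GC$. One small caveat on your closing remark (which is outside the lemma itself): the homogeneous system $MC = CN$ alone has a two-dimensional solution space, so, as in Section 3, the attack also adjoins the equation $GC = CG$ in order to pin down $C$ up to a scalar multiple.
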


\begin{proof}
We compute as follows:
\[
\begin{array}{rcll}
CNC^{-1} & = & C(AGA^{-1}) C^{-1} &\text{substituting for $N$ }\\
 & = & CAC C^{-1} G C C^{-1} A^{-1} C^{-1} & \text{inserting $C C^{-1}$ twice  }\\
 & = & B C^{-1} G C B^{-1} & \text{because $B = CAC$ } \\
 & = & B G C^{-1}  C B^{-1} & \text{because $G$ and $C$ commute} \\
 & = & B G B^{-1} & \text{cancelling  $C^{-1}  C$} \\
 & = & M.
\end{array}
\]
\end{proof}

We now describe our attack on Slavin's cryptosystem.
First, note that $N$ and $M$ can both be computed from public information. 
Using the two equations $M = CNC^{-1}$ and $GC = CG$, we can carry out either of the attacks 
described in Section \ref{attacks.sec} to compute a scalar multiple of the unknown matrix $C$, say $C'$.
Thus $C = \mu C'$ for some unknown value $\mu \in {\zed_n}^*$.

Slavin \cite{Slav} argues that, unlike the situation in the Cayley-Purser algorithm,
it is not sufficient to compute a scalar multiple of $C$.
In the Cayley-Purser algorithm, equation (\ref{scalar.eq}) 
allows $K^{-1}$ to be computed by an attacker
using any scalar multiple of $C$. On the other hand, in Slavin's cryptosystem,
the ``key'' $K = CEC$. If we replace $C$ by a scalar multiple, then the attacker
doesn't obtain the correct value of $K$.

However, an attacker can compute $K$ by a slightly different approach.
Consider the equation $B = CAC$. We can rewrite this as
$B = \mu^2 C'AC'$. From this, it is a simple matter to compute $\mu^2$.
Computing $\mu$ is infeasible unless the factorization of
$n$ is known; however, it turns out that we do not need to compute $\mu$.

Finally, consider the equation $K = CEC$. We can rewrite this as
$K = \mu^2 C'EC'$. Since $C', E$ and $\mu^2$ are known, the attacker can compute $K$
and use it to decrypt the ciphertext $Y$.

Thus, the steps in the attack are summarized as follows:
\begin{enumerate}
\item Compute $M$ and $N$ from $A$, $B$ and $G$.
\item Compute $C'$, where $C = \mu C'$ for some unknown value $\mu$.
\item Use the equation $B = \mu^2 C'AC'$ to compute $\mu^2$.
\item Given a ciphertext $(E,Y)$, compute $K = \mu^2 C'EC'$.
\item Use $K$ to decrypt $Y$.
\end{enumerate}
Observe that steps 1--3 only involve the public key; they only need to be carried out once.
Steps 4--5 then allow the decryption of a specific ciphertext; they can be repeated
as often as desired, for various ciphertexts.

\begin{example}
\label{ex3}
Suppose $p = 223$ and $q= 173$, so $n = 38579$. Suppose we define
\[ A =
\left(
\begin{array}{cc}
16807 & 38390\\
17333 & 21788
\end{array}
\right)
\]
and 
 \[ C =
\left(
\begin{array}{cc}
10106 & 10420\\
27722 & 27626
\end{array}
\right).
\] 
Then 
\[ B =
\left(
\begin{array}{cc}
17590 & 36066\\
32833 & 33331
\end{array}
\right).
\] 
Finally, suppose $G = C^{11}$; then 
\[ G =
\left(
\begin{array}{cc}
11303 & 17971 \\ 
5315 & 18194
\end{array}
\right).
\] 
The attack begins by computing $M$ and $N$:
\[ M =  BGB^{-1} =
\left(
\begin{array}{cc}
18545 & 20365 \\ 
25987 & 10952
\end{array}
\right)
\] 
and 
\[ N = AGA^{-1} = 
\left(
\begin{array}{cc}
37716 & 5184 \\ 
18941 & 30360
\end{array}
\right).
\] 
Using the linear algebra attack,
the system of linear equation to be solved is
\[ \left(
\begin{array}{cccc}
19171 & 18941 & ,18214 & 0\\
5184 & 11815 & 0 & 18214\\
12592 & 0 & 26764 & 18941\\
0 & 12592 & 5184 & 19408\\
0 & 5315 & 20608 & 0\\
17971 & 6891 & 0 & 20608\\
33264 & 0 & 31688 & 5315\\
0 & 33264 & 17971 & 0
\end{array}
\right)
\left(
\begin{array}{c}
a \\ b \\ c \\ d
\end{array}
\right) =
\left(
\begin{array}{c}
0 \\ 0 \\ 0 \\ 0
\end{array}
\right).
\]
The solution to this system is
\[ (a,b,c,d) = \mu (12688 ,23061,22337,38578),\] 
$\mu \in \zed_n.$

Let 
\[C' = 
\left(
\begin{array}{cc}
12688 & 23061\\
22337 & 38578
\end{array}
\right).
\]
Then $C'$ is an unknown scalar multiple of $C$.
However, the attacker can compute 
\[ C' A C' = 
\left(
\begin{array}{cc}
27011 & 27739\\
26956 & 8680
\end{array}
\right)
\]
By comparing $B$ to $C'AC'$, it is easy to see that
$\mu^2 = 26098$.

Now suppose a plaintext is encrypted. 
First, $D = G^s$ is computed for a random exponent $s$.
Suppose that $D = G^{129}$; then
\[D = 
\left(
\begin{array}{cc}
18776 & 31218\\
20617 & 22838
\end{array}
\right).
\]
Then 
\[E = DAD = 
\left(
\begin{array}{cc}
33712 & 19745\\
30382 & 3658
\end{array}
\right)
\]
and
\[K = DBD = 
\left(
\begin{array}{cc}
33935 & 21771\\
36280 & 7314
\end{array}
\right)
\]

Given $E$, the attacker can compute 
\[ \mu^2 C' E C' = 
\left(
\begin{array}{cc}
33935 & 21771\\
36280 & 7314
\end{array}
\right),
\]
which yields the ``key'' $K$.
\end{example}

\section{Final Comments}

The Cayley-Purser algorithm was a huge news story in early 1999.
However, like many other ``broken'' cryptosystems, it has been forgotten
to a certain extent. I hope that this paper serves to highlight some
interesting mathematical techniques that can be used to analyze and
break this cryptosystem as well as the later, lesser-known variant 
that was patented by Slain in 2008.



\end{document}